\renewcommand*{\backrefalt}[4]{%
\ifcase #1 %
No citations.%
\or
One citation on page #2.%
\else
#3 citations on pages #2.%
\fi
}
\newtheorem{theorem}{Theorem}
\newtheorem{lemma}{Lemma}
\newcommand{\setp}{P}
\newcommand{\F}{\mathcal{F}}
\newcommand{\C}{\mathcal{C}}
\newcommand{\Conv}{\textup{Conv}}
\newcommand{\Approx}{\Lambda}
\renewcommand{\SS}{\mathcal{S}}
\newcommand{\CDTA}{\overline{\triangle}}
\newcommand{\CDT}{\textup{CDT}}
\newcommand{\CDTT}{\triangle}
\newcommand{\CFS}{S}
\newcommand{\SC}{\Delta}
\newcommand{\SP}{Q}
\newcommand{\nil}{\textup{nil}}
\title{Counting Triangulations and other Crossing-Free Structures Approximately}
\author{Victor Alvarez\thanks{Information Systems Group, Universit\"{a}t des Saarlandes, Germany, {\tt alvarez@cs.uni-saarland.de}.}
        \and
        Karl Bringmann\thanks{Max Planck Institute for Informatics, Karl Bringmann is a recipient of the \emph{Google Europe Fellowship in Randomized Algorithms}, and this research is supported in part by this Google Fellowship. {\tt kbringma@mpi-inf.mpg.de}.}
        \and 
        Saurabh Ray\thanks{Ben Gurion University of the Negev, Israel {\tt saurabh@math.bgu.ac.il}.}
        \and 
        Raimund Seidel\thanks{Fachrichtung Informatik, Universit\"{a}t des Saarlandes, Germany, \tt{rseidel@cs.uni-saarland.de}.}}
\begin{document}
\thispagestyle{empty}
\maketitle

\begin{abstract}
We consider the problem of counting straight-edge triangulations of a given set~$\setp$ of $n$ points in the plane. Until very recently it was not known whether the \emph{exact} number of triangulations of $\setp$ can be computed asymptotically faster than by enumerating all triangulations. We now know that the number of triangulations of $\setp$ can be computed in $O^{*}(2^{n})$ time~\cite{Alvarez:2013:SAA:2493132.2462392}, which is less than the lower bound of $\Omega(2.43^{n})$ on the number of triangulations of \emph{any} point set~\cite{Sharir:2011:DRT:1988090.1988484}. In this paper we address the question of whether one can approximately count triangulations in sub-exponential time. We present an algorithm with sub-exponential running time and sub-exponential approximation ratio, that is, denoting by~$\Approx$ the output of our algorithm, and by $c^{n}$ the exact number of triangulations of $\setp$, for some positive constant $c$, we prove that $c^{n}\leq\Approx\leq c^{n}\cdot 2^{o(n)}$. This is the first algorithm that in sub-exponential time computes a $(1+o(1))$-approximation of the base of the number of triangulations, more precisely, $c\leq\Approx^{\frac{1}{n}}\leq(1 + o(1))c$. Our algorithm can be adapted to approximately count other crossing-free structures on~$\setp$, keeping the quality of approximation and running time intact. In this paper we show how to do this for matchings and spanning trees.
\end{abstract}

\section{Introduction}

Let $\setp$ be a set of $n$ points on the plane. A crossing-free structure on $P$ is a straight-line plane graph with vertex set $P$. Examples of crossing-free structures include triangulations, trees, matchings, and spanning cycles, also known as the polygonizations of $\setp$, among others. Let $X$ denote a certain type of crossing-free structures and let $\F_{X}(\setp)$ denote the set of all crossing-free structures on $P$ of type $X$.One of the most intriguing problems in Computational Geometry is the following: given $\setp$ and a particular type of crossing-free structures $X$, how fast can we compute the cardinality of $\F_{X}(\setp)$? 

Among all kinds of crossing-free structures on $\setp$, triangulations are perhaps the most studied ones, so let us first focus on them. How fast can we compute the number of triangulations of $\setp$? For starters, although triangulations are the most studied, the problem of computing the number of \emph{all} triangulations of $\setp$ is, in general, neither known to be \#P-hard, nor do we know of a polynomial time algorithm to even approximate that number. When the point set is in convex position, an easy recurrence relation can be derived showing that the number of triangulations spanned by $n$ points in convex position is $C_{n-2}$, where $C_k$ is the $k$-th catalan number. F. Hurtado and M. Noy \cite{hurtado1997counting} were able to find further formulas to exactly compute the number of triangulations of ``almost convex sets''. Unfortunately, the approach of finding exact formulas does not seem to take us much further. 

For \emph{any} given set of points $\setp$ it is possible to enumerate all its triangulations in time proportional to its number of triangulations. This is because the \emph{flip graph}\footnote{The flip graph is a graph whose vertex set is the set of triangulations of $\setp$ and where there is an edge between vertices of the flip graph if both corresponding triangulations can be transformed into each other by flipping exactly one of their edges.} of triangulations is connected, see~\cite{lawson1972generation,Sibson01011978}. Thus any graph traversal algorithm like DFS or BFS can be used to enumerate the vertices of the flip graph of triangulations of $\setp$. One limitation of such traversal algorithms, however, is that the amount of memory used is proportional to the number of vertices in the graph - which is known to \emph{always} be exponential, because the number of triangulations of $\setp$ lies between $\Omega(2.43^n)$~\cite{Sharir:2011:DRT:1988090.1988484} and $O(30^n)$~\cite{DBLP:journals/combinatorics/SharirS11}. Using a general technique due to D. Avis and K. Fukuda called Reverse Search, see~\cite{Avis199621}, it is possible to enumerate triangulations while keeping the memory usage polynomial in $n$. This technique has been further improved in~\cite{Bespamyatnikh2002271,Katoh:2009aa}. However, it is important to observe that, since the number of triangulations is exponential in $n$, counting triangulations by enumeration takes exponential time, so a natural question is whether one can do significantly better. 

In~\cite{Aichholzer:1999:PT:304893.304896} O. Aichholzer introduced the notion of the ``path of a triangulation'', \mbox{T-path} from now on, that allowed a divide-and-conquer approach to speed up counting. From empirical observations, this approach seems to count triangulations in time sub-linear in the number of triangulations, that is, apparently faster than enumeration. Unfortunately, no proof that this algorithm is \emph{always} faster than enumeration - or even a good analysis of its running time - has been found. Subsequently, a simple algorithm based on dynamic programming was presented in~\cite{ray-seidel}. This algorithm empirically appears to be substantially faster than Aichholzer's algorithm. From the limited empirical data, the running time seems to be proportional to the square root of the number of triangulations. However, as with Aichholzer's algorithm, the worst case running time of this algorithms seems difficult to analyze and no bound on the running time has been found. More recently, two algorithms whose running times could actually be analyzed were presented. The first algorithm, presented in~\cite{sweep-line}, combines Aichholzer's idea of \mbox{T-paths} with a sweep-line algorithm and runs in time proportional to the largest number of \mbox{T-paths} found during execution (within a $poly(n)$ factor). In~\cite{sweep-line} the number of \mbox{T-paths} is shown to be at most $O(9^{n})$. It is important to observe that, for very particular and well-studied configurations of points, the number of T-paths can be shown to be significantly smaller than the number of triangulations. Thus, at least for those configurations, the algorithm of~\cite{sweep-line} counts triangulations faster than by using enumeration techniques. Unfortunately, this algorithm turned out to be very slow in practice, which is most probably due to the fact that the number of T-paths can still be very large, in~\cite{dumitrescu2001enumerating} a configuration having at least $\Omega\left(4^{n}\right)$ T-paths was shown. The second algorithm, presented in~\cite{Alvarez:2012:CCS:2261250.2261259}, uses a divide-and-conquer approach based on the onion layers of the given set of points. This algorithm was shown to have a running time of at most $O^{*}(3.1414^{n})$ and is likely to have a running time sub-linear is the number of triangulations since it is widely believed that the number of triangulations spanned by \emph{any} set of $n$ points is at least $\sqrt{12}^n \approxeq 3.46^n$, see~\cite{Santos2003186,aichholzer2001point,Aichholzer2004135}. From the experimental point of view, the second algorithm turned out to be significantly faster, for certain configurations of points, than the one shown in~\cite{ray-seidel}. These experiments can be found in~\cite{onion-layers}. Finally, in 2013, an algorithm with running time $O^{*}(2^{n})$ was presented~\cite{Alvarez:2013:SAA:2493132.2462392}. This last algorithm finally shows that enumeration algorithms for triangulations can indeed \emph{always} be beaten, as all point sets with $n$ points have at least $\Omega(2.43^n)$ triangulations. From an experimental point of view it was also shown to be significantly faster than all previous algorithms on a variety of inputs~\cite{Alvarez:2013:SAA:2493132.2462392}.

With respect to crossing-free structures other than triangulations the situation is very similar. In~\cite{Katoh:2009aa} a general framework for enumerating crossing-free structures (including spanning trees and perfect matchings) was presented. However, as for triangulations, the number of enumerated objects is again in general exponential. For example, the number of spanning trees is between $\Theta^{*}(6.75^{n})$~\cite{Flajolet1999203} and $O(141.07^{n})$~\cite{hoffmann2013counting}, and the number of perfect matching is between $\Theta^{*}(2^{n})$~\cite{garcia2000lower} and $O(10.05^{n})$~\cite{sharir2013counting}. The interested reader is referred to~\cite{aichholzer2006number, demaine, sheffer} for up-to-date lists of bounds for other crossing-free structures. Thus, again the question arises whether we can count these structures faster than by enumeration. In this respect, in~\cite{sweep-line} an algorithm was shown that counts pseudo-triangulations in time proportional to the largest number of pseudo-triangulation paths~\cite{aichholzer2003zigzag}, that the algorithm encounters. It is however still not known whether there are always considerably less pseudo-triangulation paths than pseudo-triangulations. In~\cite{razen2011counting} an algorithm was shown that counts \emph{all} crossing-free structures of a given set of points $\setp$ in time sub-linear in the number of counted objects, thus achieving the desired exponential speed-up in this case. Finally, it was until very recently that new fast counting algorithms appeared. In~\cite{Wettstein-MScThesis} it was shown, for example, that the number of \emph{all} crossing-free structures, perfect matchings, and convex partitions can be computed in time $O^{*}(2.839^{n})$ for the former and $O^{*}(2^{n})$ for the latter two. These algorithms generalize the idea presented in~\cite{Alvarez:2013:SAA:2493132.2462392} and show that enumeration can, at least in these cases, \emph{always} be beaten. It is, however, still open whether for spanning trees and spanning cycles, two of the most popular classes of crossing-free structures, the same can be said, see~\cite{Wettstein-MScThesis} for partial results in this direction.

\subsection{Our contribution}

The $O^{*}(2^{n})$ algorithm of~\cite{Alvarez:2013:SAA:2493132.2462392} for counting triangulations \emph{exactly} seems hard to beat at this point. 
Instead, we relax the goal to computing an \emph{approximation} of the number of triangulations and pose the question of whether one can reduce the runtime in this setting. The answer presented in this paper is, to the best of our knowledge, the first result in this new line of research. 

Note that, since for all sets of $n$ points the number of triangulations is $\Omega(2.43^n)$~\cite{Sharir:2011:DRT:1988090.1988484} and $O(30^n)$~\cite{DBLP:journals/combinatorics/SharirS11}, the quantity $\Theta\left( \sqrt{30 \times 2.43}^{\ n}\right)$ approximates the exact number of triangulations within a factor of $O\left( \sqrt{30/2.43}^{\ n}\right)$. Thus, one can trade the exponential time of an exact algorithm for a polynomial time algorithm with exponential approximation ratio. In this paper we bridge the gap between these two solutions by presenting an algorithm with sub-exponential running time and sub-exponential approximation ratio.

Let $\F_{X}(\setp)$ denote the set of all crossing-free structures of type $X$ on $\setp$, where $X$ could mean triangulations, matchings, or spanning trees. The main result of this paper is the following:

\begin{theorem}\label{theo:approxTri}
	Let $\setp$ be a set of $n$ points on the plane. Then a number $\Approx$ can be computed in time $2^{o(n)}$ such that $|\F_{X}(\setp)| \leq \Approx \leq |\F_{X}(\setp)|^{1+o(1)} = 2^{o(n)} |\F_{X}(\setp)|$.
\end{theorem}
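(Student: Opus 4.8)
The plan is to attack all three structure types uniformly by a balanced geometric divide-and-conquer, using the exact $O^{*}(2^{n})$ counting algorithm of~\cite{Alvarez:2013:SAA:2493132.2462392} only as a black box on small sub-instances. Concretely, I would recursively split $\setp$ by a line $\ell$ into two parts $L$ and $R$ of comparable size, recurse on each, and stop once a piece has size $b=b(n)$ small enough that the exact algorithm counts its structures in time $2^{O(b)}=2^{o(n)}$; choosing, say, $b=\mathrm{polylog}(n)$ keeps the total leaf work sub-exponential, so the only thing left to control is the approximation ratio introduced when the pieces are recombined.

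To obtain both inequalities of Theorem~\ref{theo:approxTri} at once, I would phrase the combination step as a single gluing map. For one split, let $L^{+}$ and $R^{+}$ be the two parts each augmented by a small set of ``interface'' points near $\ell$, and define $\Approx$ (before recursion) as the number of mutually compatible pairs $(S_{L},S_{R})\in\F_{X}(L^{+})\times\F_{X}(R^{+})$. The aim is a gluing map $\psi$ from compatible pairs to $\F_{X}(\setp)$ that is simultaneously (i) surjective, which forces $|\F_{X}(\setp)|\le\Approx$, and (ii) at most $2^{o(n)}$-to-one, which forces $\Approx\le 2^{o(n)}|\F_{X}(\setp)|$. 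Unfolding the recursion, $\Approx$ becomes a product of exactly computed local counts times the interface factors accumulated at the internal nodes, and the two inequalities compose across the levels.

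The error analysis then reduces to controlling the interface factor per split and summing its logarithm over the recursion tree. If a split of an $m$-point piece contributes an interface factor $2^{O(f(m))}$, a balanced binary recursion contributes $\sum_{i}2^{i}f(n/2^{i})$ to the exponent; this is $o(n)$ as soon as $f$ is sufficiently sublinear. For instance $f(m)=O(\sqrt{m}\,\log m)$, the bound one would hope for from an $O(\sqrt{m})$-size separator, gives total exponent $O((n/\sqrt{b})\log n)=o(n)$ once the recursion is truncated at pieces of size $b=\omega(\log^{2}n)$, which is comfortably consistent with the leaf-work budget above.

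The main obstacle is precisely this interface factor, and it is where the geometry bites. A single line can be crossed by $\Theta(m)$ edges of a crossing-free structure -- already for the fan triangulation of points in convex position -- so naively recording the set of crossing edges gives $f(m)=\Theta(m)$ and blows the bound up to $2^{\Theta(n)}$; moreover a triangulation does \emph{not} restrict to a triangulation on each side of $\ell$, so there is no clean product decomposition to exploit. The crux must therefore be that a multiplicative $2^{o(n)}$ approximation need not resolve the crossing edges exactly: most of them should be charged to the augmented local counts $\F_{X}(L^{+})$ and $\F_{X}(R^{+})$ themselves, so that only a coarse summary of the interface -- of size $O(\sqrt{m})$ rather than $\Theta(m)$ -- has to be tracked, with the residual ambiguity absorbed into the permitted slack through the bounded-to-one property of $\psi$. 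Making this charging precise, and in particular choosing the interface points and the compatibility relation so that $\psi$ is provably surjective and $2^{o(n)}$-to-one for each of triangulations (the hard case), matchings, and spanning trees, is the heart of the proof.
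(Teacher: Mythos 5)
Your high-level framing---enumerate interface objects, recurse on the two sides, argue that the gluing map is surjective (no undercounting) and boundedly-many-to-one (controlled overcounting), truncate the recursion at small leaves and solve them exactly with the $O^{*}(2^{n})$ algorithm---does match the skeleton of the paper's argument. But the decomposition you chose, a straight-line cut, is precisely the one that does not work, and you say so yourself: a line can be crossed by $\Theta(m)$ edges of a triangulation, a triangulation does not restrict to a triangulation on either side of $\ell$, and you defer the resolution (``only a coarse summary of size $O(\sqrt{m})$ has to be tracked'') to an unspecified charging argument that you yourself call ``the heart of the proof.'' That deferred step is exactly the missing content, and it is far from clear that any choice of interface points near $\ell$ can supply it: a surjective gluing map whose interface carries only $o(m)$ bits while $\Theta(m)$ edges cross the cut has no evident reason to be $2^{o(m)}$-to-one. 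So the proposal identifies the right obstacle but does not overcome it.

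The paper's key idea, which replaces the line, is to take the separator from \emph{inside} the structure being counted: by the simple cycle separator theorem (Theorem~\ref{theo:miller}), every triangulation of $\setp$ (after adding a vertex at infinity so that the outer face becomes a triangle) contains a simple cycle $C$ on at most $\sqrt{4n}$ vertices, consisting of \emph{edges of the triangulation}, whose interior and exterior each contain at most $2n/3$ points. Because $C$ is a subgraph of the triangulation, nothing crosses it; the entire interface is $C$ itself, described by $O(\sqrt{n})$ points plus an ordering, so all candidates can be enumerated by brute force in $2^{O(\sqrt{n}\log n)}$ time (Lemma~\ref{lemmas:1}). Existence of at least one separator per triangulation gives $|\F_{X}(\setp)|\leq\Approx$, the overcount per recursion level is at most the number of candidate separators, and an induction (Lemma~\ref{lemmas:2}) sums this to $2^{O(n^{3/4}\sqrt{\log n})}$; note that this induction needs the leaf size $\SC$ to be $n^{\Omega(1)}$, e.g.\ $\SC=\sqrt{n}\log(n)$, not polylogarithmic as you propose. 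Finally, for matchings and spanning trees the paper does not split the structures themselves but counts annotated constrained Delaunay triangulations containing them, using ``fat'' (triangular) cycle separators so that flippability and the annotations can be verified locally within each sub-problem---another ingredient absent from your sketch.
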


The precise $o(n)$ terms mentioned in Theorem~\ref{theo:approxTri} are $O\left(\sqrt{n}\log(n)\right)$ for the running time and $O\left(n^{\frac{3}{4}}\sqrt{\log(n)}\right)$ for the approximation factor. At the end of~\S~\ref{sec:theoremTri} we mention a trade-off between these two, running time and approximation factor. 

While the approximation factor of $\Approx$ is rather big, the computed value is of the same order of magnitude as the correct value of $|\F_{X}(\setp)|$, that is, we compute a $(1+o(1))$-approximation of the base of the exponentially large value $|\F_{X}(\setp)|$. More precisely, if we denote $|\F_{X}(\setp)|$ by $c^{n}$, for some positive constant $c$ that depends on $\setp$ and $X$, then we have $c\leq\Approx^{\frac{1}{n}}\leq c^{1 + o(1)}\leq (1 + o(1))c$. Also, this approximation can be computed in sub-exponential time, which, at least theoretically, is asymptotically faster than the worst-case running times of the algorithms presented in~\cite{sweep-line,Alvarez:2012:CCS:2261250.2261259,Alvarez:2013:SAA:2493132.2462392,Wettstein-MScThesis}. This is certainly very appealing. 

The rest of the paper is divided as follows: We start in~\S~\ref{sec:preliminaries} with basic preliminaries. For simplicity, we first show in~\S~\ref{sec:count-tri} the algorithm for counting triangulations approximately and in~\S~\ref{sec:theoremTri} the corresponding proof of Theorem~\ref{theo:approxTri}. We generalize the algorithm for counting matchings and spanning trees in~\S~\ref{sec:count-ocfs}. We close the paper in~\S~\ref{sec:conclusions} with some remarks and conclusions.

\section{Preliminaries}\label{sec:preliminaries}

Our algorithm uses simple cycle separators as the main ingredient, originally presented in~\cite{Miller1986265} by G. L. Miller, and improved in~\cite{DBLP:journals/acta/DjidjevV97} by H. N. Djidjev and S. M. Venkatesan. The following theorem accounts for both results:

\begin{theorem}[Separator Theorem]\label{theo:miller}
	Let $T$ be a triangulation of a set of $n$ points in the plane such that the unbounded face is a triangle. Then there exists a simple cycle $C$ of size at most $\sqrt{4n}$, that separates the set $A$ of vertices of $T$ in its interior from the set $B$ of vertices of $T$ in its exterior, such that the number of elements of each one of $A$ and $B$ is always at most $\frac{2n}{3}$.
\end{theorem}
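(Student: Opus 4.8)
The plan is to follow the classical Lipton--Tarjan / Miller template for planar separators, squeezed so that the separator comes out as a single simple cycle.

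\emph{First, slice by BFS levels.} Regard $T$ as an abstract plane graph in which every bounded face, and the outer face, is a triangle. Run breadth-first search from an arbitrary vertex and let $L_0,L_1,\dots,L_h$ be the levels. Let $m$ be the \emph{median level}, the index with $|L_0\cup\cdots\cup L_{m-1}|<n/2\le|L_0\cup\cdots\cup L_m|$. The elementary point is that near $m$ one can find \emph{slicing levels} $\ell_1\le m\le\ell_2$ with $|L_{\ell_1}|,|L_{\ell_2}|=O(\sqrt n)$ and $\ell_2-\ell_1=O(\sqrt n)$: if every level in a window of width $\Theta(\sqrt n)$ just below (resp.\ just above) $m$ had more than $\Theta(\sqrt n)$ vertices, that window alone would already contain more than $n$ vertices. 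By the choice of $m$, the \emph{top} $L_0\cup\cdots\cup L_{\ell_1-1}$ and the \emph{bottom} $L_{\ell_2+1}\cup\cdots\cup L_h$ each hold at most $n/2$ vertices, while the \emph{band} $L_{\ell_1}\cup\cdots\cup L_{\ell_2}$ has height only $O(\sqrt n)$.

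\emph{Next, shrink and use a fundamental cycle.} Contract the connected top to a single vertex $\alpha$ and the connected bottom to a single vertex $\beta$, re-triangulating any non-triangular faces that appear, to obtain a plane triangulation $G'$ on the band together with $\alpha$ and $\beta$; weight $\alpha$ by $|\mathrm{top}|$, $\beta$ by $|\mathrm{bottom}|$, and every band vertex by $1$, so the weights sum to $n$. Since the band has height $O(\sqrt n)$ and $\alpha,\beta$ are adjacent only to its extreme levels, a BFS tree of $G'$ rooted at $\alpha$ has depth $d=O(\sqrt n)$. Now apply the standard fundamental-cycle lemma: in a plane triangulation with a rooted spanning tree of depth $d$ in which no vertex carries more than a third of the total weight $W$, the fundamental cycle of some non-tree edge --- the tree path between its two endpoints together with the edge, hence automatically a \emph{simple} cycle --- has length at most $2d+1$ and encloses weight at most $2W/3$ on each side. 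One proves this by sweeping through the non-tree edges in the cyclic order in which they hang off the tree, tracking the enclosed weight, and then descending greedily into the heavier subtree until the split is balanced. This yields a simple cycle $C'$ in $G'$ of length $O(\sqrt n)$ that is $2/3$-balanced for the weights.

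\emph{Finally, lift back and tune the constant.} If $C'$ avoids $\alpha$ and $\beta$ it already lives in the band, hence is a simple cycle of $T$ with at most $2n/3$ vertices on each side, and we are done; if it passes through $\alpha$ (symmetrically $\beta$), reroute its single visit through a short path inside the connected top joining the two band-neighbours of $\alpha$ that $C'$ uses. The one situation not covered by the fundamental-cycle lemma is when a contracted apex carries more than $W/3$, i.e.\ is a region of more than $n/3$ points; but such an apex has at most $|L_{\ell_1}|=O(\sqrt n)$ band-neighbours, so the cycle formed by those neighbours around it --- its link in the triangulation --- is itself a short simple cycle that puts the heavy region on one side and everything else (of weight at most $2n/3$, since the region already weighs at least $n/3$) on the other. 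The hard part is making \emph{all} of these requirements hold simultaneously --- slices thin but contracted parts light, reroute short, balance intact --- and then tracking every constant so that the length comes out as $\sqrt{4n}$ rather than merely $O(\sqrt n)$; this last optimisation over Miller's original bound is the Djidjev--Venkatesan contribution, and it is exactly where the real work lies. (An alternative that avoids the BFS bookkeeping is the circle-packing route: represent $T$ by a Koebe--Andreev disk packing, take a centrepoint of the disk centres, and note that a random plane through it meets only $O(\sqrt n)$ disks; turning that crossed-disk set into an honest simple separating cycle of $T$ of the right length needs comparable care.)
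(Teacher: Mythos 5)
You should first be aware that the paper does not prove this statement at all: Theorem~\ref{theo:miller} is imported verbatim from the literature (Miller~\cite{Miller1986265} for the existence of a balanced simple cycle separator, and Djidjev--Venkatesan~\cite{DBLP:journals/acta/DjidjevV97} for the sharpened length bound $\sqrt{4n}$), so there is no in-paper argument to compare yours against. Judged on its own, your sketch follows the right classical template (BFS slicing, contraction of the light top and bottom, fundamental cycle of a non-tree edge in a triangulation), but it has two genuine gaps. First, the lifting step is where the actual difficulty of Miller's theorem lives, and you dispatch it in one clause: when the fundamental cycle $C'$ passes through the contracted apex $\alpha$, you propose to ``reroute its single visit through a short path inside the connected top.'' There is no reason such a path is short --- the top may contain up to $n/2$ vertices and a path in it between the two band-neighbours of $\alpha$ can be correspondingly long --- and even if one routes instead along the thin level $L_{\ell_1}$, that level need not induce a connected subgraph, the detour can destroy simplicity of the cycle, and it changes which vertices lie inside versus outside, so the $2/3$ balance must be re-established. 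Handling all of this simultaneously is precisely the content of Miller's proof, not a routine patch.

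Second, the theorem as stated claims the explicit bound $\sqrt{4n}$, and your argument only ever produces $O(\sqrt n)$ with unspecified constants; you say yourself that extracting $\sqrt{4n}$ ``is exactly where the real work lies.'' That is an honest admission, but it means the statement actually asserted --- which the paper then uses quantitatively, e.g.\ in the enumeration bound $2^{O(\sqrt n \log n)}$ of Lemma~\ref{lemmas:1} and in the recurrence $m_i \le \alpha m + \sqrt{4m}$ inside Lemma~\ref{lemmas:2} --- is not established by your proposal. (For the paper's purposes any bound of the form $c\sqrt n$ would in fact suffice, but your sketch does not pin down any explicit constant either.) As it stands the proposal is a correct high-level roadmap to the Lipton--Tarjan/Miller machinery rather than a proof; the appropriate course here is to do what the paper does and cite \cite{Miller1986265,DBLP:journals/acta/DjidjevV97}, or else to carry out the lifting and constant-tracking in full.
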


Observe that the Separator Theorem does not imply that \emph{every} triangulation of a set of points contains a \emph{unique} simple cycle separator. One can easily come up with examples in which a triangulation contains more than one simple cycle separator. The important part here is that \emph{every} triangulation contains \emph{at least} one simple cycle separator. 

\section{Counting triangulations approximately}\label{sec:count-tri}

The idea for an approximate counting algorithm is suggested by the Separator Theorem: We enumerate all possible simple cycle separators $C$ of size at most $\sqrt{4n}$ that we can find in the given set $\setp$. We then recursively compute the number of triangulations of each of the parts $A$ and $B$, specified by the Separator Theorem, that are delimited by $C$\footnote{Thus separator $C$ also forms part of the two sub-problems.}. We then multiply the number we obtain for the sub-problem $A\cup C$ by the number we obtain for sub-problem $B\cup C$, and we add these products over all cycle separators $C$. With this algorithm we clearly over-count the triangulations of $\setp$, and it remains to show that we do not over-count by too much. We will later see that in order to keep over-counting small, we have to solve small recursive sub-problems exactly. Note that problems of size smaller than a threshold $\SC$ can be solved exactly in time $O^{*}\left(2^{\SC}\right) = 2^{O(\SC)}$, see~\cite{Alvarez:2013:SAA:2493132.2462392}.

However, there are some technicalities that we have to overcome first. For starters, the Separator Theorem holds only if the unbounded face of $T$ is also a triangle. Thus, if we add a dummy vertex $v_{\infty}$ outside $\Conv(\setp)$, along with the adjacencies between $v_{\infty}$ and the vertices of $\Conv(\setp)$, to make the unbounded face a triangle, we can apply the Separator Theorem. Once a simple cycle with the dividing properties of a separator is found, by the deletion of $v_{\infty}$ we are  left with a separator that is either the original cycle that we found, if $v_{\infty}$ does not appear as a vertex of the separator, or a path otherwise. Thus, when guessing a separator we have to consider that it might be a path instead of a cycle.
This brings us to the second technical issue. As we go deeper in the recursion we might create ``holes'' in $\setp$ whose boundaries are the separators that we have considered thus far. That is, the recursive problems are polygonal regions, possibly with holes, containing points of $\setp$. Therefore, when guessing a separator, cycle or path, we have to arbitrarily triangulate the holes first. This does not modify the size of the sets we guess for a separator in a sub-problem.

We can now prove the first lemma:

\begin{lemma}\label{lemmas:1}
	Let $\F_{T}(\setp)$ be the set of triangulations of a set $\setp$ of $n$ points. Then \emph{all} separators, simple cycles or paths, among \emph{all} the elements of $\F_{T}(\setp)$ can be enumerated in time $2^{O\left(\sqrt{n}\log(n)\right)}$.
\end{lemma}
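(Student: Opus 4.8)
The plan is to bound the number of candidate separators by a simple counting argument on vertex subsets, and then argue that for each such subset the number of ways it can be realized as a simple cycle or path is also controlled. Recall that a separator guaranteed by Theorem~\ref{theo:miller} has size at most $\sqrt{4n} = 2\sqrt{n}$. After the reduction described above (adding $v_\infty$, triangulating holes, and possibly deleting $v_\infty$), a separator in a sub-problem is a simple cycle or simple path on at most $2\sqrt{n}$ vertices of $\setp$. So the first step is to observe that the number of vertex subsets of $\setp$ of size at most $2\sqrt{n}$ is at most $\sum_{k=0}^{2\sqrt{n}} \binom{n}{k} \le 2\sqrt{n}\cdot\binom{n}{2\sqrt{n}} \le 2\sqrt{n}\cdot n^{2\sqrt{n}} = 2^{O(\sqrt{n}\log n)}$.

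The second step is to argue that, given a fixed vertex subset $S$ with $|S|\le 2\sqrt{n}$, the number of distinct simple cycles (or simple paths) that can be drawn through exactly the points of $S$, and that could occur as a separator in \emph{some} triangulation of $\setp$, is at most $(2\sqrt{n})! = 2^{O(\sqrt{n}\log n)}$: a simple cycle on $|S|$ vertices is determined by a cyclic ordering of $S$, and a simple path by a linear ordering, and there are at most $|S|!$ of each. Moreover one should note that a cyclic/linear order gives a \emph{unique} candidate polygonal curve since the vertices are fixed points in the plane, and one can test in $poly(n)$ time whether that curve is crossing-free and actually separates $\setp$ into two parts each of size at most $2n/3$ (discarding it otherwise). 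Multiplying the two bounds, the total number of candidate separators — over all of $\setp$, hence a fortiori over all sub-problems that arise — is $2^{O(\sqrt{n}\log n)}\cdot 2^{O(\sqrt{n}\log n)} = 2^{O(\sqrt{n}\log n)}$, and each is enumerated and validated in $poly(n) = 2^{o(\sqrt{n}\log n)}$ time, giving the claimed bound.

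I would also spell out the one subtlety in the phrase ``among all elements of $\F_T(\setp)$'': we are not enumerating separators per triangulation (that would be hopeless), but rather enumerating a superset of all cycles/paths that arise as the separator of \emph{at least one} triangulation, and the point of Theorem~\ref{theo:miller} is only that this superset is non-empty for every triangulation — which is exactly what the recursive algorithm needs. The enumeration itself is concrete: iterate over all subsets $S$ of size $\le 2\sqrt n$, over all cyclic/linear orderings of each $S$, form the corresponding closed/open polyline, and output it if it is simple and balanced-separating.

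The main obstacle, and the only place where care is genuinely needed, is making precise that after deleting $v_\infty$ and after the hole-triangulation step the separator still lives on at most $2\sqrt{n}$ points \emph{of $\setp$} (not on auxiliary points), so that the $\binom{n}{k}$ count applies verbatim; the discussion preceding the lemma already establishes that triangulating holes ``does not modify the size of the sets we guess for a separator in a sub-problem'', so this is a matter of bookkeeping rather than a new idea. Everything else is a routine estimate of the form $\binom{n}{O(\sqrt n)}\cdot (O(\sqrt n))! \le n^{O(\sqrt n)} = 2^{O(\sqrt n\log n)}$.
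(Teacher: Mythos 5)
Your proposal is correct and follows essentially the same route as the paper's own proof: brute-force enumeration of all vertex subsets of size at most $\sqrt{4n}$ together with their orderings (as cycles or paths), followed by a polynomial-time check of the separation property, yielding the $2^{O\left(\sqrt{n}\log(n)\right)}$ bound. You merely spell out the counting estimate $\binom{n}{O(\sqrt n)}\cdot\bigl(O(\sqrt n)\bigr)! = n^{O(\sqrt n)}$ and the ``superset of separators'' subtlety more explicitly than the paper does, which is a welcome clarification but not a different argument.
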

\begin{proof}
	We know by the Separator Theorem and the discussion beneath that \emph{every} element of $\F_{T}(\setp)$, a triangulation, contains at least one separator $C$, simple cycle or path. Moreover, the size of $C$ is at most $\sqrt{4n}$. Thus, searching by brute-force will do the job. We can enumerate all the sub-sets of $\setp$ of size at most $\sqrt{4n}$ along with their permutations. A permutation tells us how to connect the points of the guessed sub-set, after also guessing whether we have a path or cycle. We can then verify if the constructed simple cycle, or path, fulfils the dividing properties of a separator, as specified in the Separator Theorem. 
	
	It is not hard to check that the total number of guessed subsets and their permutations is $2^{O\left(\sqrt{n}\log(n)\right)}$. Verifying whether a cycle, or a path, is indeed a separator can be done in polynomial time. Thus, the total time spent remains being $2^{O\left(\sqrt{n}\log(n)\right)}$.
\end{proof}

We can now proceed with the corresponding proof of Theorem~\ref{theo:approxTri}.

\section{Proof of Theorem~\ref{theo:approxTri}}\label{sec:theoremTri}

We first prove that the approximation ratio of the algorithm for counting triangulations is sub-exponential and then we prove that its running time is sub-exponential as well.

\subsection{Quality of approximation}\label{c-tri:sections:approx:sub-sections:2}

By the proof of Lemma~\ref{lemmas:1} we also obtain that the number of simple cycle separators cannot be larger than $2^{O\left(\sqrt{n}\log(n)\right)}$. Since at \emph{every} stage of the recursion of the counting algorithm \emph{no} triangulation of $\setp$ can contain more than the total number of simple cycle separators found at that stage, we can express the \emph{over-counting factor} of the algorithm by the following recurrence:
\begin{align*}
	S(\setp, \SC) &\le \sum_{C} S(A\cup C, \SC)\cdot S(B\cup C, \SC)\leq 2^{O\left(\sqrt{n}\log(n)\right)}\cdot S(A\cup C^{*}, \SC)\cdot S(B\cup C^{*}, \SC)
\end{align*}
where the summation is over all separators $C$ available at the level of recursion.~$A\cup C$, $B\cup C$ are the sub-problems as explained before, $C^{*}$ is the cycle that maximizes the term $S(A\cup C, \SC)\cdot S(B\cup C, \SC)$ over all $C$, and $\SC$ is a stopping size. Specifically, whenever the current recursive sub-problem contains $\leq\SC$ points we stop the recursion and compute the number of triangulations of the sub-problem exactly. Hence, we have $S(P', \SC) = 1$ whenever $|P'|\leq\SC$. We can now write:
\begin{align*}
	S^{\prime}(\setp, \SC) := \log(S(\setp, \SC)) &\leq O\left(\sqrt{n}\log(n)\right) + S^{\prime}(A\cup C^{*}, \SC) + S^{\prime}(B\cup C^{*}, \SC).
\end{align*}

Our goal now is to prove the following lemma:

\begin{lemma}\label{lemmas:2}
  Let $P$ be a set of $n$ points on the plane and assume $\SC = n^{\Omega(1)}$, $n > \SC$, and $\SC$ is at least a sufficiently large constant. Then we have
	$$S^{\prime}(P, \SC) = O\left(\left(\frac{n}{\sqrt{\SC/3}} - \sqrt{n}\right)\log\SC\right).$$
\end{lemma}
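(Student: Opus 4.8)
The plan is to prove, by strong induction on $n=|P|$, the slightly cleaner bound that there is an absolute constant $K$ with $S^{\prime}(P,\SC)\le K\left(\frac{|P|}{\sqrt{\SC/3}}-\sqrt{|P|}\right)\log\SC$ for every point set $P$ with $|P|>\SC/3$; the lemma is then the case $|P|=n$. Write $s:=\sqrt{\SC/3}$ and $g(m):=\frac{m}{s}-\sqrt{m}$, so the target is $S^{\prime}(P,\SC)\le K\,g(n)\log\SC$. Two preliminary facts make the induction run. First, \emph{every} subproblem that occurs in the recursion has more than $\SC/3$ points: the root has $n>\SC$, and when a problem of size $m>\SC$ is split through a separator $C^{*}$ the two subproblems $A\cup C^{*}$ and $B\cup C^{*}$ have sizes $n_{1}=|A|+|C^{*}|$ and $n_{2}=|B|+|C^{*}|$; since $|A|,|B|\le\frac{2m}{3}$, $|A|+|B|+|C^{*}|=m$, and $|C^{*}|\le\sqrt{4m}$ (Theorem~\ref{theo:miller}), both $n_{1}$ and $n_{2}$ lie in $\bigl[\frac m3,\ \frac{2m}3+2\sqrt m\bigr]$, hence exceed $\SC/3$, and are strictly less than $m$ once $\SC$ (and thus $m$) is larger than a fixed constant, which legitimises the strong induction. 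Second, for every such $m>\SC/3=s^{2}$ we have $\sqrt m\ge s$ and therefore $g(m)\ge 0$.

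The base case is $\SC/3<n\le\SC$: here the algorithm solves the subproblem exactly, so $S(P,\SC)=1$, $S^{\prime}(P,\SC)=0\le K\,g(n)\log\SC$ by the second fact. For the inductive step $n>\SC$, the recurrence established just above reads $S^{\prime}(P,\SC)\le c_{0}\sqrt n\log n+S^{\prime}(A\cup C^{*},\SC)+S^{\prime}(B\cup C^{*},\SC)$, where $c_{0}$ is the absolute constant hidden in the $2^{O(\sqrt n\log n)}$ bound on the number of separators (from the proof of Lemma~\ref{lemmas:1}). Applying the induction hypothesis to the two subproblems, of sizes $n_{1},n_{2}$ with $n_{1}+n_{2}=n+|C^{*}|$, it suffices to show
\[ c_{0}\sqrt n\log n\ \le\ K\bigl(g(n)-g(n_{1})-g(n_{2})\bigr)\log\SC. \]

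The heart of the argument is a lower bound on $g(n)-g(n_{1})-g(n_{2})$. The linear parts give $\tfrac{n-n_{1}-n_{2}}{s}=-\tfrac{|C^{*}|}{s}$. For the square-root parts, concavity of $t\mapsto\sqrt t$ together with $n_{1}+n_{2}=n+|C^{*}|$ and $n_{1},n_{2}\in[\tfrac n3,\ \tfrac{2n}3+|C^{*}|]$ shows $\sqrt{n_{1}}+\sqrt{n_{2}}$ is minimised at the unbalanced endpoint, so $\sqrt{n_{1}}+\sqrt{n_{2}}\ge\sqrt{n/3}+\sqrt{2n/3}=\bigl(\sqrt{1/3}+\sqrt{2/3}\bigr)\sqrt n\ge 1.39\sqrt n$. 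Hence
\[ g(n)-g(n_{1})-g(n_{2})\ =\ \sqrt{n_{1}}+\sqrt{n_{2}}-\sqrt n-\tfrac{|C^{*}|}{s}\ \ge\ 0.39\,\sqrt n-\frac{2\sqrt n}{s}, \]
and since $s=\sqrt{\SC/3}\to\infty$ this is at least, say, $0.3\sqrt n$ once $\SC$ exceeds a suitable constant. Plugging this in, it remains to verify $c_{0}\sqrt n\log n\le 0.3\,K\sqrt n\log\SC$, i.e. $c_{0}\log n\le 0.3\,K\log\SC$; this holds for $K$ a large enough absolute constant because the hypothesis $\SC=n^{\Omega(1)}$ means $\SC\ge n^{\delta}$ for a fixed $\delta>0$, so $\log m\le\log n\le\tfrac1\delta\log\SC$ for every subproblem size $m$ appearing in the recursion. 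This closes the induction, and the case $|P|=n$ is the lemma.

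I expect the single genuinely non-routine step to be identifying the right potential: measuring the recursion on the ``square-root scale''. A separator has size only $O(\sqrt m)=o(m)$, yet the two children have \emph{total} size $n_{1}+n_{2}=m+\Theta(\sqrt m)$ rather than $m$, and it is precisely the strict super-additivity $\sqrt{n_{1}}+\sqrt{n_{2}}\ge(1+\Omega(1))\sqrt n$ of the square root along every admissible split -- which fails for the linear measure -- that both creates a surplus large enough to absorb the per-level $O(\sqrt n\log n)$ over-counting and forces the bound into the shape $\frac{n}{\sqrt{\SC/3}}-\sqrt n$, a quantity that vanishes exactly when $n$ descends to the stopping size $\SC/3$. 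The rest -- controlling the split ratios so that no subproblem falls below $\SC/3$, and using $\SC=n^{\Omega(1)}$ to trade the stray factor $\log n$ against $\log\SC$ -- calls for care but no further idea.
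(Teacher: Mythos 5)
Your proposal is correct and follows essentially the same route as the paper: induction on the subproblem size with the potential $\frac{m}{\sqrt{\SC/3}}-\sqrt{m}$, the surplus coming from $\sqrt{n_1}+\sqrt{n_2}\ge\left(\sqrt{1/3}+\sqrt{2/3}\right)\sqrt{n}$ under the $\frac{2}{3}$-balance condition, absorption of the separator-size term $O(\sqrt{m})$ using that $\SC$ is large, the trade of $\log n$ against $\log\SC$ via $\SC\ge n^{\delta}$, and the base case $|\SP|>\SC/3$ making the potential nonnegative. The only (cosmetic) difference is that you exploit the exact identity $n_1+n_2=n+|C^{*}|$ together with concavity, where the paper parametrizes the split as $m_i\le\alpha m+\sqrt{4m}$ with $\alpha+\beta=1$ and minimizes $\sqrt{\alpha}+\sqrt{\beta}$; your bookkeeping is, if anything, slightly cleaner.
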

\begin{proof}
	We use induction over the size of $P$. Let $P^{\prime}\subseteq P$ of size $m\leq n$. We have,
	\small
	\begin{align}
		S^{\prime}(P^{\prime}, \SC) &\leq O\left(\sqrt{m}\log(m)\right) + S^{\prime}(A\cup C^{*}, \SC)\ + S^{\prime}(B\cup C^{*}, \SC)\nonumber\\
		&\leq O\left(\sqrt{m}\log(m)\right) + c \left(\frac{m_{1}}{\sqrt{\frac{\SC}{3}}} - \sqrt{m_{1}} + \frac{m_{2}}{\sqrt{\frac{\SC}{3}}} - \sqrt{m_{2}}\right)\log\SC,\label{eq:1}
	\end{align}
	\normalsize
	where $m_{1}, m_{2}$ are the sizes of the sub-problems $A\cup C^{*}$ and $B\cup C^{*}$ of $P'$, respectively, and $c$ is some sufficiently large positive constant. By the Separator Theorem, we can express $m_{1}\leq\alpha m + \sqrt{4m}$ and $m_{2}\leq\beta m + \sqrt{4m}$, such that: (\oldstylenums{1}) $\alpha,\beta$ are constants that depend on the instance, so $\alpha = \alpha\left(A\cup C^{*}\right)$ and $\beta = \beta\left(B\cup C^{*}\right)$, (\oldstylenums{2}) $0 < \beta\leq\alpha\leq\frac{2}{3}$, and  (\oldstylenums{3}) $\alpha + \beta = 1$. 
	 
	 Now let us for the moment focus on the term $\frac{m_{1}}{\sqrt{\SC/3}} - \sqrt{m_{1}} + \frac{m_{2}}{\sqrt{\SC/3}} - \sqrt{m_{2}}$ of equation~(\ref{eq:1}) above:
	 \small
	 \begin{align*}
	 	\frac{m_{1}}{\sqrt{\frac{\SC}{3}}} - \sqrt{m_{1}} + \frac{m_{2}}{\sqrt{\frac{\SC}{3}}} - \sqrt{m_{2}} &= \frac{m_{1} + m_{2}}{\sqrt{\frac{\SC}{3}}} - \sqrt{m_{1}} - \sqrt{m_{2}}\\
		&\leq\frac{\alpha m + \sqrt{4m} + \beta m + \sqrt{4m}}{\sqrt{\frac{\SC}{3}}} -
		\sqrt{m_{1}} - \sqrt{m_{2}}\\
		&\leq\frac{m + 4\sqrt{m}}{\sqrt{\frac{\SC}{3}}} - \sqrt{\alpha m} - \sqrt{\beta m}\\
		&=\frac{m + 4\sqrt{m}}{\sqrt{\frac{\SC}{3}}} - \sqrt{m}\left(\sqrt{\alpha} + \sqrt{\beta}\right)\\
		&\leq\frac{m + 4\sqrt{m}}{\sqrt{\frac{\SC}{3}}} - \sqrt{m}\left(1 + \varepsilon\right)
	 \end{align*}
	 \normalsize
	 
	 The last inequality is obtained by minimizing $\sqrt{\alpha} + \sqrt{\beta}$. Since we mentioned before that $0\leq\beta\leq\alpha\leq\frac{2}{3}$ and $\alpha + \beta = 1$, the minimum of $\sqrt{\alpha} + \sqrt{\beta}$ is attained at $(\alpha, \beta) = \left(\frac{2}{3}, \frac{1}{3}\right)$, and is strictly larger than one, so we can choose $\varepsilon > 0$. Now, since $\SC$ is sufficiently large, we have $\frac{4\sqrt{m}}{\sqrt{\SC/3}}\ll\varepsilon\sqrt{m}$, so $\frac{4\sqrt{m}}{\sqrt{\SC/3}} - \varepsilon\sqrt{m} \le -\varepsilon' \sqrt{m}$, for some positive constant $\varepsilon'$. Thus we can continue as follows:
	 \small
	 \begin{align}
	 	\frac{m_{1}}{\sqrt{\frac{\SC}{3}}} - \sqrt{m_{1}} + \frac{m_{2}}{\sqrt{\frac{\SC}{3}}} - \sqrt{m_{2}} &\leq\frac{m + 4\sqrt{m}}{\sqrt{\frac{\SC}{3}}} - \sqrt{m}\left(1 + \varepsilon\right)\leq\frac{m}{\sqrt{\frac{\SC}{3}}} - (1+\varepsilon')\sqrt{m}.\label{eq:2}
	 \end{align}
	 \normalsize
	 
	 Combining equations~(\ref{eq:1}) and~(\ref{eq:2}) we obtain
	 \small
	 \begin{align*}
	 	S^{\prime}(P^{\prime}, \SC)&\leq O\left(\sqrt{m}\log(m)\right) + c\left(\frac{m}{\sqrt{\frac{\SC}{3}}} - (1+\varepsilon')\sqrt{m}\right)\log\SC\nonumber\\
		&\leq c\left(\frac{m}{\sqrt{\frac{\SC}{3}}} - \sqrt{m}\right) \log\SC + O\left(\sqrt{m}\log(m)\right) -
		c\cdot \varepsilon'\sqrt{m}\log\SC
	 \end{align*}
	 \normalsize
	 
	 If we choose $\SC$ to be sufficiently large, say $\SC\geq n^{\delta}$, for some constant $\delta > 0$, then we have $\SC\geq n^{\delta}\geq m^{\delta}$, and the negative term $-c\cdot \varepsilon'\sqrt{m}\log\SC$ is larger, for appropriately large $c$, than the $O\left(\sqrt{m}\log(m)\right)$ term. Hence, we can conclude that $S^{\prime}(P^{\prime}, \SC)\leq O\left(\left(\frac{m}{\sqrt{\SC/3}} - \sqrt{m}\right)\log\SC\right)$, which proves the induction step.
	 
	 It still remains to prove that the inductive claim holds for the boundary condition, so let $\SP$ be a recursive sub-problem of size $\leq\SC$. As $\SP$ stems from an application of the Separator Theorem, it is easy to see that $|\SP|\geq\frac{\SC}{3}$. Thus, we have $S^{\prime}(\SP, \SC) = 0\leq c\left(\frac{|\SP|}{\sqrt{\SC/3}} - \sqrt{|\SP|}\right)\log\SC$. Lemma~\ref{lemmas:2} follows.
\end{proof}

Now, let $\Approx$ be the number computed by our algorithm. Recall that $|\F_{T}(\setp)|$ is the exact number of triangulations of $\setp$. By setting $\SC = \sqrt{n}\log(n)$ we obtain an over-counting factor of the algorithm of: 
\begin{align*}
	S(P, \SC) = 2^{S^{\prime}(P, \SC)} = 2^{O\left(\frac{n\log\SC}{\sqrt{\SC}}\right)} = 2^{O\left(n^{\frac{3}{4}}\sqrt{\log(n)}\right)}
\end{align*}

Hence $|\F_{T}(\setp)|\leq\Approx\leq |\F_{T}(\setp)|\cdot 2^{O\left(n^{\frac{3}{4}}\sqrt{\log(n)}\right)} = |\F_{T}(\setp)|^{1 + o(1)}$. This completes the qualitative part of Theorem~\ref{theo:approxTri}. It remains to discuss the running time of the algorithm.

\subsection{Running time}\label{c-tri:sections:approx:sub-sections:1}

The running time of the algorithm can be expressed with the following recurrence:
\begin{align*}
	T(n) & < 2^{O\left(\sqrt{n}\log(n)\right)} \cdot T\left(\frac{2n}{3} + \sqrt{4n}\right).
\end{align*}

Taking again $T^{\prime}(n) = \log(T(n))$ yields $T^{\prime}(n) := T^{\prime}\left(\frac{2n}{3} + \sqrt{4n}\right) + O\left(\sqrt{n}\log(n)\right)$, which can then be solved using the well-known Akra-Bazzi Theorem for recurrences, see~\cite{leighton}. This yields $T^{\prime}(n) = O\left(\sqrt{n}\log(n)\right)$. There is, however, one detail missing, the stopping condition $\SC$. In order to use the Akra-Bazzi Theorem we need a boundary condition of $T(n) = 1$ for $1\leq n\leq n_{0}$ (for some constant $n_0$), but in the algorithm we stop the recursion whenever a sub-problem $\SP$ is of size $\leq\SC$ (which is dependent on the size $n$ of the original point set). At that point we compute the exact number of triangulations of $\SP$, which gives $T(|\SP|) = 2^{O(|\SP|)} = 2^{O(\SC)}$. Hence the exponent in the running time of the algorithm is upper-bounded by the solution of $T^{\prime}(n)$, as given by the Akra-Bazzi Theorem, plus $O\left(\SC\right)$, \emph{i.e.}, $T(n) = 2^{O\left(\sqrt{n}\log(n) + \SC\right)}$. If as before we assume that $\SC = \sqrt{n}\log(n)$ then we end up with $T(n) = 2^{O\left(\sqrt{n}\log(n)\right)} = 2^{o(n)}$, which concludes the proof of Theorem~\ref{theo:approxTri} for triangulations.

As a final remark observe that we could have used other values for $\SC$, rather than $\sqrt{n}\log(n)$, without violating any argument in the proofs. This yields a tradeoff with running time $2^{O(\SC)}$ and approximation ratio $2^{O\left(\frac{n \log\SC}{\sqrt{\SC}}\right)}$ for any $\sqrt{n} \log (n) \le \SC \le n$.
Although the quality of the approximation improves with larger $\SC$, the running time increases. Since we see no way of not having over-counting with this algorithm, we regard $\SC~=~\sqrt{n}\log(n)$ as the most reasonable setting. 

\section{Extension to other crossing-free structures}\label{sec:count-ocfs}

In this section we show how to count crossing-free structures other than triangulations. The idea is to use the framework first developed in~\cite{Alvarez:2012:CCS:2261250.2261259} that allows to exactly count crossing-free structures using the constrained Delaunay triangulation (\CDT) $\CDTT^{\CFS}$ constrained to contain the crossing-free structure $\CFS$. For completeness we will briefly describe the constrained Delaunay triangulation as explained in~\cite{Alvarez:2012:CCS:2261250.2261259,onion-layers}.

\begin{quote}

{\textbf{The constrained Delaunay triangulation}} (CDT) $\triangle^{S}$ of $\setp$ was first introduced in~\cite{Chew:1987:CDT:41958.41981}. Formally, it is the triangulation $T$ of $\setp$ containing $S$ such that no edge $e$ in $T\setminus S$ is flippable in the following sense: Let $\triangle_{1}, \triangle_{2}$ be triangles of $\setp$ sharing $e$. The edge $e$ is flippable if and only if $\square = \triangle_{1}\cup\triangle_{2}$ is convex, and replacing $e$ with the other diagonal of $\square$ increases the smallest angle of the triangulation of $\square$. One of the most important properties of constrained Delaunay triangulations is its uniqueness if no four points of $\setp$ are cocircular. Thus, under standard non-degeneracy assumptions, \emph{there is a unique CDT for any given set of mandatory edges}. For a good study on constrained Delaunay triangulations we suggest the book~\cite{Hjelle:2006:TA:1214284} by {\O}. Hjelle and M. D{\ae}hlen.
\end{quote}

Without loss of generality we will assume that no four points of $\setp$ are cocircular. This can always be achieved by perturbing $\setp$ while keeping the order-type of $\setp$.

Using the constrained Delaunay triangulation, algorithms to exactly count crossing-free matchings and spanning cycles (polygonizations) of $\setp$ were shown in~\cite{Alvarez:2012:CCS:2261250.2261259,onion-layers}. The main idea is the following: Assume we want to count all elements $\CFS \in \F_{\C}(\setp)$ of a certain class $\C$ of crossing-free structures on $\setp$. Now, instead of counting all $\CFS$ directly we count the pairs $(\CDTT^{S},S)$ for all $\CFS \in \F_{\C}(\setp)$, \emph{i.e.}, we count every element $\CFS \in \F_{\C}(\setp)$ embedded in its $\CDT$ $\CDTT^{S}$. This yields the correct number, since $\CDTT^{S}$ is unique w.r.t.\ $S$ by our non-degeneracy assumption. As we can see the embedding of $S$ in $\CDTT^{S}$ as annotating every edge of the triangulation $\CDTT^{S}$ by a bit specifying whether that edge belongs to $S$, we end up counting annotated triangulations. 
For this we can use similar ideas as in the case of counting triangulations: Having a set of separators $\SS$ for triangulations we enumerate all separators $C \in \SS$ and use the divide-and-conquer approach to count the crossing-free structures $\CFS$ such that $C\subseteq\CDTT^{S}$. For example, $\SS$ could be simple cycle separators, presented in~\S~\ref{sec:preliminaries}. It is important to observe that when recursing in the smaller sub-problems given by $C$, we have to have a way to locally verify whether $C\subseteq\CDTT^{S}$, \emph{i.e.}, we have to make sure that choices in a sub-problem do not depend on choices in other sub-problems, as otherwise we might get an overcounting by much more than the number of separators. Recall that we want to keep overcounting under control while achieving sub-exponential runtime.

In order to make the previous general idea clear we will adapt the way matchings are counted in~\cite{Alvarez:2012:CCS:2261250.2261259,onion-layers}. The main difference in the algorithms is the choice of separators. While the separators used in~\cite{Alvarez:2012:CCS:2261250.2261259,onion-layers} allow to count exactly, the simple cycle separators used in this paper are useful to count approximately, since $\CDTT^{S}$ might have more than one simple cycle separator.

\subsection{Counting matchings approximately}

Let $M$ be a matching of $\setp$, not necessarily perfect. Let $\CDTT^{M}$ be the $\CDT$ of $M$. As stated before, we are working under the assumption that no four points of $\setp$ are cocircular, thus $\CDTT^{M}$ is unique w.r.t.\ $M$. Now, annotate $\CDTT^{M}$ as follows:
\begin{itemize}
	\item Each vertex $v$ of $\CDTT^{M}$ is annotated with a number $m_{v}$ that indicates the vertex of~$M$ that $v$ is matched to. If $v$ is unmatched in $M$ we set $m_v$ to, say, $0$.
	\item Each edge $e$ of $\CDTT^{M}$ is annotated with a bit $b_{e}$ that indicates whether $e$ belongs to~$M$ or not.
\end{itemize}

Let us denote by $\CDTA^{M}$ the $\CDT$ $\CDTT^{M}$ annotated according to the previous rules. Let $C\in\SS$ be a separator contained in $\CDTT^{M}$ that splits $\Conv(\setp)$ into regions $R_{1},\ldots, R_{t}$. Separator $C$ inherits all the information, \emph{i.e.}, annotations, from $\CDTA^{M}$. The separator thus annotated will be denoted by $C_{\CDTA^{M}}$.

\subsubsection{The algorithm}\label{sec:algo-cdt}

An annotated triangulation is said to be \emph{legal} if and only if it is identical to $\CDTA^{M}$, for some matching $M$ of $\setp$. Now observe that there is a one-to-one correspondence between matchings of $\setp$ and legal annotated triangulations. Thus, instead of counting matchings directly we may count legal annotated triangulations. To do so we proceed essentially as in the algorithm presented in~\S~\ref{sec:count-tri}: (\oldstylenums{1}) We enumerate \emph{all} separators in~$\SS$. (\oldstylenums{2}) We enumerate \emph{all} annotations for each separator $C\in\SS$. Each such annotated separator splits $\Conv(\setp)$ into smaller regions we recur in. In each recursive sub-problem we count legal annotated triangulations that are consistent with the annotated separator, \emph{i.e.}, for example, if two adjacent vertices of the separator have been annotated, and they agree to be matched to each other and the edge connecting them is annotated to be in the matching, then in future sub-problems other edges adjacent to those two vertices cannot be annotated to be in a matching as well. (\oldstylenums{3}) We stop the recursion whenever we find a sub-problem of size at most $\SC$, just as in the algorithm shown in \S~\ref{sec:count-tri}, and compute the \emph{exact} number of legal annotated triangulations that are consistent with the annotations of the boundary of the sub-problem.

It should be clear by now that the only sub-problems that will contribute to the final computed number of matchings are the ones for which the algorithm, in its whole run, could complete a full annotated constrained Delaunay triangulation without finding any violation of the annotations inherited by the separators that led to that triangulation. There are however two details we have not yet taken care of: (\oldstylenums{1}) How do we verify that \emph{each} edge that is not annotated to be in a matching $M$ is not flippable in $\CDTA^{M}$? As it stands right now, for an annotated separator $C$ dividing $\Conv(\setp)$ into regions $R_{1}, \ldots, R_{t}$, we recur into each $R_{i}$, $1\leq i\leq t$ independently, although an edge $e$ of $C$ could become flippable depending the two triangles containing $e$ that lie in different regions. (\oldstylenums{2}) How do we \emph{exactly} count legal annotated triangulations that are consistent with the annotations of the boundary of a sub-problem of size at most $\SC$? We will tackle each difficulty in turn.

\subsubsection{Triangular cycle separators}

To overcome difficulty (\oldstylenums{1}) mentioned before, we make use of what we can call ``fat'' separators, which are very similar to a construction in~\cite{Alvarez:2012:CCS:2261250.2261259, onion-layers}. Let $T$ be a triangulation of point set $P$ and let $C \subset T$ be a cycle separator. Every edge of $C$ that belongs to the interior of $\Conv(\setp)$ is contained in two triangles in $T$. For every such edge choose those two triangles, and for every edge of $C$ on the boundary of $\Conv(\setp)$ choose the unique triangle that the edge is part of. We make $C$ fat by considering the union of all those chosen triangles (as a set of edges), see Figure~\ref{figs:fig1}. We call the resulting fat version of the cycle separator $C$ a \emph{triangular cycle separator} and denote it by $C^{\circ}$.

\begin{figure}[!htb]
\begin{center}
\includegraphics[height=5cm]{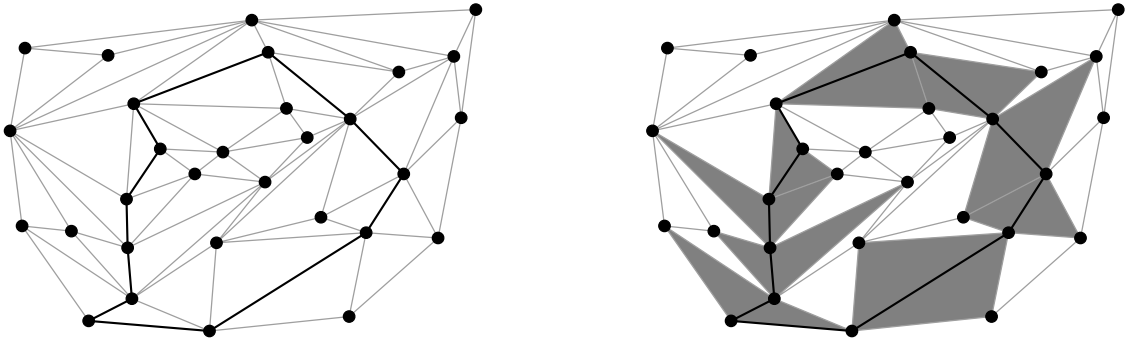}
\caption{To the left a triangulation and a simple cycle, shown in gray and black respectively. To the right the cycle is made ``fat'' by attaching the shown triangles to it.}
\label{figs:fig1}
\end{center}
\end{figure}

Fat separators allow us to overcome the difficulty regarding edge-flippability as follows. Instead of using simple cycle separators, as we did in~\S~\ref{sec:count-tri}, we use triangular cycle separators for counting matchings. Since for every edge $e$ contained in a fat separator $C^\circ$ at least one of its incident triangles is fully contained in $C^\circ$, we can check flippability of $e$ in one of the sub-problems, namely the one containing the other triangle incident to $e$. Thus, flippability can be checked independently in each sub-problem.

\subsubsection{Counting legal annotated triangulations exactly}\label{sec:count-cdts}

Assume that at a certain recursion level we have finally reached a sub-problem $\SP$ of size at most $\SC$. The boundary of this sub-problem has been annotated and now we have to exactly count \emph{all} legal annotated triangulations that are consistent with these annotations. In order to achieve this we make use of enumeration algorithms for triangulations, see~\cite{Avis199621,Bespamyatnikh2002271,Katoh:2009aa}. For each enumerated triangulation $T = T(\SP)$ we consider \emph{each} sub-set $E^{\prime}\subseteq E = E(T)$ of its set of edges $E$. 
For each such $E'$ we check whether $E'$ forms a valid matching, is consistent with the annotations of the boundary of $\SP$, and violates no flippability condition. We count all sets $E'$ satisfying these conditions. It should be clear that using this brute-force approach we are exactly counting \emph{all} legal annotated triangulations that are consistent with the annotations of $\SP$. 

The number of triangulations of $\SP$ can be expressed as $c^{\SC}$, for some positive constant~$c$. The number of edges in each triangulation of $\SP$ is $O(\SC)$. Thus, enumerating every sub-set $E'$ of edges takes time $2^{O(\SC)}$. Verifying whether the chosen sub-set of edges forms a consistent matching and checking for flippability of the edges can easily be done within the same time bound. Hence, overall we can solve sub-problems of size $\SC$ in time $2^{O(\SC)}$.

\subsubsection{Quality of approximation and running time}\label{sec:count-cdts:proof}

Having overcome the two difficulties we mentioned previously, we are now left with the analysis of the approximation quality and runtime of the algorithm. Fortunately, essentially all the work has been done in~\S~\ref{sec:theoremTri}.

From Lemma~\ref{lemmas:1} we obtain that the total number of simple cycle separators is $n^{O(\sqrt{n})}$. Now, we make each such simple cycle separator fat by attaching at most two triangles to each of its edges. It is not hard to see that the total number of fat simple cycle separators, or triangular cycle separators, is $n^{O(\sqrt{n})}$, just observe that for every edge of a simple cycle separator we choose two additional points that will complete the attached triangles to make it fat. 
Next consider the annotations of the triangular cycle separators. It is easy to verify that the number of different annotations for the vertices of a separator is at most $n^{O(\sqrt{n})}$ and the number of annotations for its edges is at most $2^{O(\sqrt{n})}$. Therefore, the total number of \emph{annotated} triangular cycle separators is $n^{O(\sqrt{n})}$.

Now, observe that for the analysis in~\S~\ref{sec:theoremTri} we essentially only used the following three facts. (\oldstylenums{1}) The number of (annotated) separators of each sub-problems is at most $n^{O(\sqrt{n})}$. (\oldstylenums{2}) A sub-problem of size $\SC$ can be solved in time $2^{O(\SC)}$. (\oldstylenums{3}) All further checks can be performed in polynomial time. All three facts still hold in the case of matchings. Hence, the analysis from~\S~\ref{sec:theoremTri} goes through and we obtain the same asymptotic bounds for approximation quality and runtime. This finishes the analysis of the algorithm for counting matchings approximately. 

\subsection{Counting trees approximately}

Having shown how to count matchings approximately using annotations similar to~\cite{Alvarez:2012:CCS:2261250.2261259,onion-layers}, we now show an annotation scheme that allows to count (crossing-free) spanning trees of~$\setp$ approximately. 
Fixing an arbitrary vertex $p^* \in P$ we root every spanning tree $F$ at $p^*$ by orienting all of its edges towards $p^*$. This way, every point $p$ in $P \setminus \{p^*\}$ has exactly one outgoing edge, and we call the other end of that edge the \emph{parent} of $p$ in~$F$, denoted by $\textup{par}(p)$ (for $p^*$ we set $\textup{par}(p^*) := \nil$). Moreover, we denote by $d(p)$ the depth of $p$ in $F$, \emph{i.e.}, its distance to $p^*$ in $F$.
Note that the resulting oriented spanning trees of $P$ rooted at $p^*$ are in one-to-one correspondence to (ordinary) spanning trees of $P$, so we may count the former. In the following we will write for short \emph{spanning tree} for oriented spanning trees of $P$ rooted at~$p^*$. We annotate the CDT $\CDTT^{F}$ of a spanning tree $F$ of $P$ as follows:

\begin{itemize}
	\item Every vertex $v$ of $F$ is annotated by the pair $<\!\!\textup{par}(v),d(v)\!\!>$.
	\item Every edge $e$ of $\CDTT^{F}$ is annotated with a bit $b_{e}$ that again represents whether $e$ is part of $F$ or not.
\end{itemize}

As for matchings, the annotated CDT $\CDTT^{F}$ will be denoted by $\CDTA^{F}$. Again there is a one-to-one correspondence between (oriented, rooted at $p^*$) spanning trees of $\setp$ and legal annotated triangulations. Therefore, we will again count the latter, just as we did with matchings. In fact, the algorithm is the same as in~\ref{sec:algo-cdt}, the only difference is annotation scheme that encodes a different class of crossing-free structures.

Nevertheless, in the case of spanning trees two things that are less obvious: (\oldstylenums{1}) How to exactly count all legal annotated triangulations of sub-problems of size at most $\SC$, and (\oldstylenums{2}) the correctness of the annotation scheme: why do we only (over-)count spanning trees? Let us start with the former.

Assume that we arrive at a sub-problem $\SP$ of size at most $\SC$. In $\SP$ there are already some annotations present and we have to count \emph{all}  annotated triangulations that are consistent with those annotations. We will proceed as before: We enumerate each triangulation $T = T(\SP)$ of $\SP$ and every sub-set $E^{\prime}\subseteq E = E(T)$ of edges of $T$ that have no annotation set yet. All edges of $E^{\prime}$ are chosen to be in the spanning tree. At this point we can check for flippability conflicts. If we find any conflict we abort and consider the next sub-set of edges, otherwise we continue. Since we are interested in (oriented, rooted at $p^*$) spanning trees, we enumerate \emph{every} possible orientation of the edges in $E'$, observe that there are at most $2^{O(|E^{\prime}|)} = 2^{O(\SC)}$ possible orientations we have to go through. This orientation sets the parent of every vertex of the edges of $E^{\prime}$. Recall that we want to orient towards the root $p^*$. Now we test whether this orientation of the edges of $E^{\prime}$ is consistent with the annotations already present in $\SP$, \emph{e.g.}, a vertex $v$ that was already annotated in $\SP$, because it belonged to some separator, must have its parent set. If this parent is in $\SP$, then $v$ and its parent must be vertices of some edge of $E^{\prime}$, otherwise we can abort this run. This in particular tests that the \emph{unique} parent of every vertex of $T$ is set correctly if this parent belongs to $\SP$ as well.

\begin{figure}[!htb]
\begin{center}
\includegraphics[height=5cm]{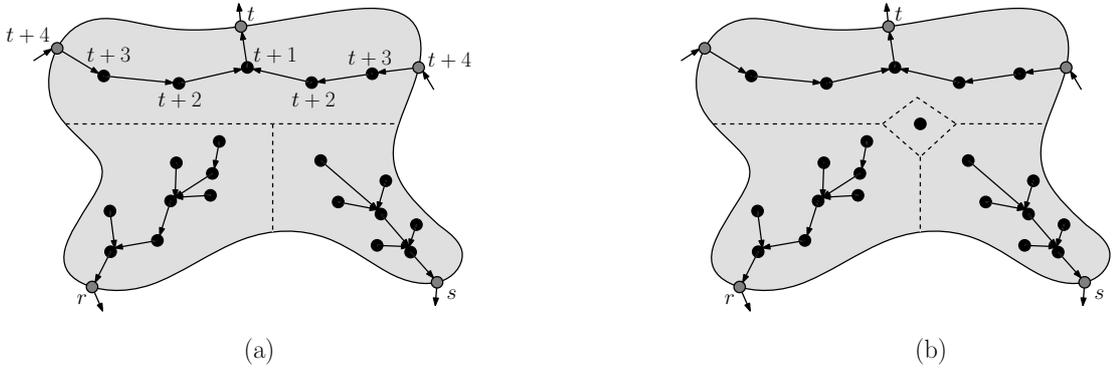}
\caption{A sub-problem $\SP$ of size at most $\SC$ is shown in gray. The boundary of $\SP$ is shown simplified, \emph{i.e.}, no triangular cycle separator is shown. The set of edges $E^{\prime}$ chosen to be in a tree is shown along with an orientation of its elements. For simplicity, the triangulation of $\SP$ that the set $E^{\prime}$ is chosen from is not shown, so in particular we are assuming that there is no flippability conflict. In (a) there are three connected components $F_{1}, F_{2}, F_{3}$, separated by dashed lines for clarity. The gray vertices on the boundary of $\SP$ are fully annotated, so we can start a BFS for each component at any of those vertices. The labels $r, s, t$ represent the annotated depth of those vertices. Starting with the vertex at depth $t$, every vertex of that component gets the shown depth assigned by the BFS. By the time we arrive at the vertices of depth $t + 4$ on the boundary of $\SP$, since they are fully annotated, we compare $t + 4$ with the depth therein annotated and verify for consistency. If they are inconsistent then we abort the run with this set of orientations to the edges of $E^{\prime}$. In (b) there are four connected components but the single vertex in the middle will never be annotated since there is no edge in $E^{\prime}$ that has it as a vertex, so even if we found consistent annotations for the other components, this run is aborted.}
\label{figs:fig2}
\end{center}
\end{figure}

Having chosen $E^{\prime}$ and having set orientations for its elements, note that we are left with some set of connected components $F_{1},\ldots F_{k}$ formed by the (oriented) edges of $T$ that we have chosen to be in the tree, see Figure~\ref{figs:fig2}.a. Now, for each connected component $F_{i}$, $1\leq i\leq k$, let $v^{i}$ denote an arbitrary vertex that was found fully annotated when first entering $\SP$, any vertex of the boundary of $\SP$ could be such a vertex since they are all fully annotated. Perform a Breadth-first search (BFS) of $F_{i}$ starting at $v^{i}$ as if the edges of $F_{i}$ had no orientations. Let $v$ be the current vertex the BFS is about to expand its neighborhood of, so in the beginning $v = v^{i}$, and let $d_{v}$ be its depth. Set the depth of every vertex $w$ of $F_{i}$ that was found while expanding the neighborhood of $v$ to $d_{v} + 1$ or $d_{v} - 1$ depending on the orientation of the edge $e = vw$, \emph{i.e.}, if $e$ is oriented towards $w$, then $w$ is the parent of $v$ and thus we set its depth to $d_{v} - 1$, and if $e$ is oriented towards $v$ we set it to $d_{v} + 1$. If the depth of $w$ is found to be already set when doing the BFS, then we just verify that the value the algorithm would write there and the value already there coincide. If those two numbers do not coincide, then we abort and discard this set of orientations given to the edges of $E^{\prime}$, otherwise we continue with the BFS. If the whole connected component $F_{i}$ was successfully traversed, then we manage to find a consistent set of annotations for it and we can continue with another connected component, if any. If after considering every connected component there is a vertex of $T$ that has not been fully annotated, then we dismiss this run as well, since we were not able to find a consistent set of annotations for $E^{\prime}$ in $\SP$, see Figure~\ref{figs:fig2}.b, otherwise we declare this run a success.

With this method we clearly exactly count \emph{all} legal annotated triangulations that are consistent with the annotations we started with for sub-problem $\SP$. It remains to show that, overall, \emph{only} spanning trees of $\setp$ are (over-)counted.

\begin{lemma}
	Let $\setp$ be a set of $n$ points. Then our annotation scheme encodes the spanning trees of $\setp$ unequivocally.
\end{lemma}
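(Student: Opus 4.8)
The plan is to establish a bijection between the set of (oriented, rooted at $p^*$) spanning trees of $\setp$ and the set of legal annotated triangulations, so that the counting algorithm, which counts the latter, indeed counts exactly the former (possibly with the over-counting coming only from non-uniqueness of the simple cycle separator, as already analyzed). Concretely, I would show two directions. First, every spanning tree $F$ gives rise to exactly one legal annotated triangulation, namely $\CDTA^{F}$: the CDT $\CDTT^{F}$ is unique with respect to $F$ by the non-degeneracy assumption, and the annotations $<\!\textup{par}(v),d(v)\!>$ on vertices and $b_e$ on edges are determined by $F$. Second, and this is the substantive direction, I would show that every legal annotated triangulation $\CDTA$ arises from a \emph{unique} crossing-free structure, and that this structure is in fact a spanning tree of $\setp$ rooted at $p^*$.

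For the second direction, start from a legal annotated triangulation $\CDTA$. The edges marked with bit $b_e = 1$ form a subgraph $F$ of the triangulation; I must argue $F$ is a spanning tree. The key structural fact to exploit is the depth annotation: consider the function $d(\cdot)$ on the vertices. The parent annotation $\textup{par}(v)$ picks out, for each $v \neq p^*$, one incident edge of $F$, and consistency of the annotation scheme (which the algorithm enforces locally in every sub-problem and globally via the BFS check) guarantees that $d(\textup{par}(v)) = d(v) - 1$ and that $\textup{par}(v)$ is itself a vertex of an edge of $F$ marked $1$. Since $d$ is a nonnegative integer-valued function with a unique vertex $p^*$ at depth $0$, following parent pointers from any vertex strictly decreases $d$ and hence must terminate at $p^*$; this shows $F$ is connected and acyclic (a cycle would force a contradiction in the $d$-values), i.e.\ $F$ is a spanning tree. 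Conversely the BFS-based verification in the exact-counting step rejects precisely those annotated triangulations for which such a globally consistent depth function fails to exist or for which some vertex is left without an incident tree edge (Figure~\ref{figs:fig2}.b), so the only annotated triangulations that survive and get counted are those that genuinely encode a spanning tree.

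To finish, I would note injectivity: two distinct spanning trees $F \neq F'$ cannot yield the same annotated triangulation, because the edge bits already distinguish $F$ from $F'$ as edge sets, and even if (hypothetically) $F$ and $F'$ had the same underlying edge set but different orientations, the parent annotations would differ. Combined with surjectivity onto the set of legal annotated triangulations (every legal annotated triangulation, by definition, equals $\CDTA^{F}$ for some $F$), this gives the claimed one-to-one correspondence, and hence the annotation scheme encodes spanning trees unequivocally.

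The main obstacle I anticipate is the global-consistency argument for the depth function: the algorithm only ever checks local consistency within a single sub-problem and along BFS traversals that start from already-annotated boundary vertices of the sub-problem, so I need to argue carefully that stitching these local checks together across the whole recursion tree indeed certifies a single globally well-defined $d(\cdot)$ — equivalently, that the edges marked $1$ cannot form a cycle or a disconnected forest once all sub-problem checks pass. The cleanest way to see this is to observe that the boundary separators are fully annotated and shared between adjacent sub-problems, so the depth values propagate consistently across separator boundaries; a cycle among the $1$-edges would traverse some separator and produce a depth mismatch at a shared vertex, and a vertex with no incident $1$-edge (in any sub-problem) is explicitly rejected. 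Making this gluing argument precise, rather than the routine bijection bookkeeping, is where the real work lies.
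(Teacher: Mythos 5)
Your proposal is correct and follows essentially the same route as the paper's proof: both extract the edge set $F$ from the bits $b_e$, use the unique-parent annotations to give every non-root vertex out-degree one, and use the depth-consistency condition $d(\textup{par}(v)) = d(v) - 1$ to rule out (directed, hence all) cycles. The paper concludes via the edge count $n-1$ plus acyclicity rather than via your termination-at-$p^*$ connectivity argument, and it treats global consistency as part of the definition of a legal annotated triangulation rather than as a gluing lemma, but these are inessential differences.
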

\begin{proof}
Consider a legal annotated triangulation $\overline{T}$ of $P$. The annotations $b_e$ of its edges induce a set of edges $F$ such that $\overline{T}$ is the CDT $\CDTT^{F}$ of $F$ (by flippability constraints). Moreover, the edges in $F$ are consistent with the annotations $<\!\!\textup{par}(v),d(v)\!\!>$ of the vertices of $\overline{T}$, and the latter give $F$ an orientation. Thus, every vertex except for the root $p^*$ has a unique parent, so it has out-degree 1, and $F$ consists of exactly $n-1$ edges. It remains to show that $F$ has no undirected cycle, then the annotations induce an (oriented, rooted at $p^*$) spanning tree of $P$. For this, first note that any undirected cycle in $F$ would also be a directed cycle  (according to the orientation of $F$ given by the parent annotations), since all vertices have out-degree at most 1. Moreover, consistency of the annotated depths $d(v)$ implies that $d(\textup{par}(v)) = d(v) - 1$ for all vertices $v$. Hence, we also cannot have any directed cycle in $F$, as otherwise for at least one edge of the cycle the depths cannot be consistent.
This finishes the proof.
\end{proof}

Finally, it remains to show what approximation factor and running time we obtain with this algorithm. For the former, the explanation given in~\S~\ref{sec:count-cdts:proof} follows verbatim. For the running time it follows almost verbatim, the only difference is that when choosing a sub-set of edges $E^{\prime}$ that will be part of a spanning tree, we have to orient them as well. This orientation comes with a $2^{|E^{\prime}|} = 2^{O(\SC)}$ overhead. That is, instead of having a runtime of $T(\SC) = c^{\SC}\cdot 2^{O(\SC)} = 2^{O(\SC)}$, as we had when counting matchings in~\ref{sec:count-cdts:proof}, we now have a runtime of~$2^{O(\SC)}\cdot~T(\SC) = 2^{O(\SC)}$, which is asymptotically the same. The time required to perform the breath-first searches also have no effect in asymptotic terms. Thus, choosing again $\SC = \sqrt{n}\log(n)$ gives us an overall running time of $T(n) = 2^{O\left(\sqrt{n}\log(n)\right)}$. Hence, both algorithm have the same asymptotic behavior.

\section{Conclusions}\label{sec:conclusions}

In this paper we have shown algorithms that, given a set of points, count triangulations, crossing-free matchings and crossing-free spanning trees approximately. Both, the approximation ratio and the running time are sub-exponential. We would like to remark that using the set of annotations for crossing-free spanning cycles shown in~\cite{Alvarez:2012:CCS:2261250.2261259} we can modify the algorithms presented in this paper to also count crossing-free spanning cycles approximately, again with sub-exponential approximation ratio and running time. After showing the algorithms for matchings and spanning trees, this extension is straightforward. On the other hand, the annotations shown in this paper to count spanning trees were not known before. They are compatible with the algorithm of~\cite{Alvarez:2012:CCS:2261250.2261259}, and combining the two yields an algorithm to count crossing-free spanning trees \emph{exactly} in time $n^{O(k)}$, where $k$ is the number of onion layers of the given set of points. Specifically, as long as $k$ is fixed we can exactly count spanning trees in polynomial time, which is an interesting result in itself.

Finally, we can express the cardinality of any of the classes of crossing-free structures considered in this paper as $c^{n}$, where $c$ depends on the given set of points but is sandwiched between two positive constants. Although the approximation ratios shown in this paper are rather large, our algorithms compute a $(1+o(1))$-approximation of the base $c$, and it does so in sub-exponential time. No algorithm with this property was known before. However, it remains an open problem to find an algorithm with sub-exponential approximation ratio and running time $2^{o(\sqrt{n} \log(n))}$, \emph{e.g.}, polynomial.


\bibliographystyle{abbrv}
\bibliography{CountingTriangulationsApproximatelyArXiv}

\begin{thebibliography}{10}

\bibitem{Aichholzer:1999:PT:304893.304896}
O.~Aichholzer.
\newblock The path of a triangulation.
\newblock In {\em Proceedings of the fifteenth annual symposium on
  Computational geometry}, SCG '99, pages 14--23, New York, NY, USA, 1999. ACM.

\bibitem{aichholzer2006number}
O.~Aichholzer, T.~Hackl, B.~Vogtenhuber, C.~Huemer, F.~Hurtado, and H.~Krasser.
\newblock On the number of plane graphs.
\newblock In {\em Proceedings of the seventeenth annual ACM-SIAM symposium on
  Discrete algorithm}, pages 504--513. ACM, 2006.

\bibitem{Aichholzer2004135}
O.~Aichholzer, F.~Hurtado, and M.~Noy.
\newblock A lower bound on the number of triangulations of planar point sets.
\newblock {\em Computational Geometry}, 29(2):135 -- 145, 2004.

\bibitem{aichholzer2001point}
O.~Aichholzer and H.~Krasser.
\newblock The point set order type data base: A collection of applications and
  results.
\newblock In {\em CCCG}, volume~1, pages 17--20, 2001.

\bibitem{aichholzer2003zigzag}
O.~Aichholzer, G.~Rote, B.~Speckmann, and I.~Streinu.
\newblock The zigzag path of a pseudo-triangulation.
\newblock In {\em Algorithms and Data Structures}, pages 377--388. Springer,
  2003.

\bibitem{Alvarez:2012:CCS:2261250.2261259}
V.~Alvarez, K.~Bringmann, R.~Curticapean, and S.~Ray.
\newblock Counting crossing-free structures.
\newblock In {\em Proceedings of the twenty-eighth annual symposium on
  Computational geometry}, SoCG '12, pages 61--68, New York, NY, USA, 2012.
  ACM.

\bibitem{onion-layers}
V.~Alvarez, K.~Bringmann, and S.~Ray.
\newblock Counting triangulations and other crossing-free structures via onion
  layers.
\newblock Submitted, 2012.

\bibitem{sweep-line}
V.~Alvarez, K.~Bringmann, and S.~Ray.
\newblock A simple sweep line algorithm for counting triangulations and
  pseudo-triangulations.
\newblock Submitted, 2012.

\bibitem{Alvarez:2013:SAA:2493132.2462392}
V.~Alvarez and R.~Seidel.
\newblock A simple aggregative algorithm for counting triangulations of planar
  point sets and related problems.
\newblock In {\em Proceedings of the twenty-ninth annual symposium on
  Computational Geometry}, SoCG '13, pages 1--8, New York, NY, USA, 2013. ACM.

\bibitem{Avis199621}
D.~Avis and K.~Fukuda.
\newblock Reverse search for enumeration.
\newblock {\em Discrete Applied Mathematics}, 65(1 - 3):21 -- 46, 1996.
\newblock First International Colloquium on Graphs and Optimization.

\bibitem{Bespamyatnikh2002271}
S.~Bespamyatnikh.
\newblock An efficient algorithm for enumeration of triangulations.
\newblock {\em Computational Geometry}, 23(3):271 -- 279, 2002.

\bibitem{Chew:1987:CDT:41958.41981}
L.~P. Chew.
\newblock Constrained delaunay triangulations.
\newblock In {\em Proceedings of the third annual symposium on Computational
  geometry}, SCG '87, pages 215--222, New York, NY, USA, 1987. ACM.

\bibitem{demaine}
E.~Demaine.
\newblock Simple polygonizations.
\newblock \url{http://erikdemaine.org/polygonization/}.
\newblock Accessed October 25, 2013.

\bibitem{DBLP:journals/acta/DjidjevV97}
H.~Djidjev and S.~M. Venkatesan.
\newblock Reduced constants for simple cycle graph separation.
\newblock {\em Acta Inf.}, 34(3):231--243, 1997.

\bibitem{dumitrescu2001enumerating}
A.~Dumitrescu, B.~G{\"a}rtner, S.~Pedroni, and E.~Welzl.
\newblock Enumerating triangulation paths.
\newblock {\em Computational Geometry}, 20(1):3--12, 2001.

\bibitem{Flajolet1999203}
P.~Flajolet and M.~Noy.
\newblock Analytic combinatorics of non-crossing configurations.
\newblock {\em Discrete Mathematics}, 204(1--3):203 -- 229, 1999.
\newblock Selected papers in honor of Henry W. Gould.

\bibitem{garcia2000lower}
A.~Garc{\i}a, M.~Noy, and J.~Tejel.
\newblock Lower bounds on the number of crossing-free subgraphs of $k_n$.
\newblock {\em Computational Geometry}, 16(4):211--221, 2000.

\bibitem{Hjelle:2006:TA:1214284}
O.~Hjelle and M.~D{\ae}hlen.
\newblock {\em Triangulations and Applications (Mathematics and
  Visualization)}.
\newblock Springer-Verlag New York, Inc., Secaucus, NJ, USA, 2006.

\bibitem{hoffmann2013counting}
M.~Hoffmann, A.~Schulz, M.~Sharir, A.~Sheffer, C.~D. T{\'o}th, and E.~Welzl.
\newblock Counting plane graphs: Flippability and its applications.
\newblock In {\em Thirty Essays on Geometric Graph Theory}, pages 303--325.
  Springer, 2013.

\bibitem{hurtado1997counting}
F.~Hurtado and M.~Noy.
\newblock Counting triangulations of almost-convex polygons.
\newblock {\em Ars Combinatoria}, 45:169--180, 1997.

\bibitem{Katoh:2009aa}
N.~Katoh and S.-I. Tanigawa.
\newblock Fast enumeration algorithms for non-crossing geometric graphs.
\newblock {\em Discrete {\&} Computational Geometry}, 42(3):443--468, 2009.

\bibitem{lawson1972generation}
C.~L. Lawson.
\newblock Generation of a triangular grid with applications to contour
  plotting.
\newblock {\em Jet Propul. Lab. Techn. Memo}, 299, 1972.

\bibitem{leighton}
T.~Leighton.
\newblock Notes on better master theorems for divide-and-conquer recurrences.
\newblock Technical report, Massachusetts Institute of Technology, 1996.

\bibitem{Miller1986265}
G.~L. Miller.
\newblock Finding small simple cycle separators for 2-connected planar graphs.
\newblock {\em Journal of Computer and System Sciences}, 32(3):265 -- 279,
  1986.

\bibitem{ray-seidel}
S.~Ray and R.~Seidel.
\newblock A simple and less slow method for counting triangulations and for
  related problems.
\newblock In {\em Proceedings of the 20th European Workshop on Computational
  Geometry}, EuroCG '04, March 2004.

\bibitem{razen2011counting}
A.~Razen and E.~Welzl.
\newblock Counting plane graphs with exponential speed-up.
\newblock In {\em Rainbow of computer science}, pages 36--46. Springer, 2011.

\bibitem{Santos2003186}
F.~Santos and R.~Seidel.
\newblock A better upper bound on the number of triangulations of a planar
  point set.
\newblock {\em Journal of Combinatorial Theory, Series A}, 102(1):186 -- 193,
  2003.

\bibitem{DBLP:journals/combinatorics/SharirS11}
M.~Sharir and A.~Sheffer.
\newblock Counting triangulations of planar point sets.
\newblock {\em Electr. J. Comb.}, 18(1), 2011.

\bibitem{Sharir:2011:DRT:1988090.1988484}
M.~Sharir, A.~Sheffer, and E.~Welzl.
\newblock On degrees in random triangulations of point sets.
\newblock {\em J. Comb. Theory Ser. A}, 118(7):1979--1999, Oct. 2011.

\bibitem{sharir2013counting}
M.~Sharir, A.~Sheffer, and E.~Welzl.
\newblock Counting plane graphs: Perfect matchings, spanning cycles, and
  kasteleynʼs technique.
\newblock {\em Journal of Combinatorial Theory, Series A}, 120(4):777--794,
  2013.

\bibitem{sheffer}
A.~Sheffer.
\newblock Numbers of plane graphs.
\newblock \url{http://www.cs.tau.ac.il/\textasciitilde
  sheffera/counting/PlaneGraphs.html}.
\newblock Accessed October 25, 2013.

\bibitem{Sibson01011978}
R.~Sibson.
\newblock Locally equiangular triangulations.
\newblock {\em The Computer Journal}, 21(3):243--245, 1978.

\bibitem{Wettstein-MScThesis}
M.~Wettstein.
\newblock Algorithms for counting crossing-free configurations.
\newblock Master's thesis, Dept. of Computer Science, ETH Z\"{u}rich, 2013.

\end{thebibliography}

\end{document}